
\documentclass[letterpaper, 10 pt, conference]{ieeeconf}  
\IEEEoverridecommandlockouts                              

\usepackage{graphicx} 
\usepackage{amsmath} 
\usepackage{amssymb}  
\usepackage{mathtools}
\usepackage{hyperref}
\usepackage{xcolor}

\usepackage{subcaption}

\usepackage[noend]{algpseudocode}
\usepackage{algorithm}

\usepackage{bbm}

\usepackage{physics}


\newcommand{\qed}{$\square$}
\newcommand{\defeq}{\vcentcolon=}

\newcommand{\dgap}{d}

\DeclareMathOperator*{\argmax}{arg\,max}

\newtheorem{property}{Property}
\newtheorem{thm}{Theorem}
\newtheorem{defn}{Definition}
\newtheorem{prop}{Proposition}

\newtheorem{assum}{Assumption}
\newtheorem{remark}{Remark}


\let\emptyset\varnothing

\usepackage{tikz}
\usetikzlibrary{shapes.geometric, arrows.meta, positioning}
\tikzstyle{startstop} = [rectangle, rounded corners, minimum width=1cm, minimum height=2cm,text centered, draw=black, fill=pink, text width=0.5cm]
\tikzstyle{IO} = [ellipse, minimum width=2.5cm, minimum height=0.1cm,text centered, draw=black, text width=3.5cm]
\tikzstyle{arrow} = [thick,->,>={Stealth}]
\tikzset{font={\fontsize{10pt}{12}\selectfont}}

\usepackage{lipsum}

\newcommand\blfootnote[1]{%
  \begingroup
  \renewcommand\thefootnote{}\footnote{#1}%
  \addtocounter{footnote}{-1}%
  \endgroup
}

\usepackage{tikz}
\usetikzlibrary{patterns}
\usepackage{pgfplots}
\tikzstyle{bag} = [align=left]
\usetikzlibrary{shapes.misc}

\usetikzlibrary{decorations.pathmorphing}
\tikzset{zigzag/.style={decorate,decoration=zigzag}}
\definecolor{sandybrown}{rgb}{0.96, 0.64, 0.38}

\usepackage[normalem]{ulem}
\newif\ifremove
\global\removetrue

\global\removefalse
\ifremove
\renewcommand\sout[1]{}
\fi


\title{\LARGE \bf
Continuous-time Data-driven Barrier Certificate Synthesis}

\author{Luke Rickard$^{1}$, Alessandro Abate$^{2}$ and Kostas Margellos$^{1}$
\thanks{$^{1}$Department of Engineering Science, University of Oxford}
\thanks{$^{2}$Department of Computer Science, University of Oxford}
\thanks{This work was supported by the EPSRC Centre for Doctoral Training in Autonomous Intelligent Machines and Systems EP/S024050/1.}
}

\begin{document}

\maketitle
\thispagestyle{empty}
\pagestyle{empty}

\begin{abstract}

We consider the problem of verifying safety for continuous-time dynamical systems. 
Developing upon recent advancements in data-driven verification, we use only a finite number of sampled trajectories to learn a \emph{barrier certificate}, namely 
a function which verifies safety.
We train a safety-informed neural network to act as this certificate, with an appropriately designed loss function to encompass the safety conditions. 
In addition, we provide probabilistic generalisation guarantees from discrete samples 
of continuous trajectories, to unseen continuous ones.
Numerical results demonstrate the efficacy of our approach and contrast it with related results in the literature.
\hypersetup{hidelinks}\blfootnote{For the purpose of Open Access, the authors
have applied a CC BY public copyright licence to any Author Accepted
Manuscript (AAM) version arising from this submission.}
\end{abstract}

\section{Introduction}

Ensuring the safety of continuous-time dynamical systems is of critical importance in an increasingly autonomous world~\cite{DBLP:conf/hybrid/EdwardsPA24,DBLP:journals/tac/NejatiLJSZ23,DBLP:journals/corr/abs-2502-05510}. 
It is often infeasible to model system behaviour precisely, thus making direct use of system data to verify behaviour is of interest~\cite{DBLP:conf/memocode/Abate17,DBLP:conf/hybrid/KozarevQHT16}.
A technique for verifying properties of dynamical systems involves discretising the state space \cite{DBLP:journals/jair/BadingsRAPPSJ23}, under approximation guarantees,  and verifying the resulting model. 
Alternatively, the use of \emph{certificates}~\cite{DBLP:conf/eucc/AmesCENST19,DBLP:conf/hybrid/PrajnaJ04} allows one to directly analyse the continuous-state system.
These certificates map the system's states to real values, and exhibit certain properties that are relevant for analysis: here, in particular, we construct safety certificates for continuous-time systems, but extensions to the more complex certificates in~\cite{DBLP:journals/corr/abs-2502-05510} are possible.

There are a number of techniques for synthesising such certificates. 
In the case that an exact model is known, one can use a polynomial function as a certificate to formulate a convex sum-of-squares problem~\cite{DBLP:conf/amcc/Papachristodoulou05a}. 
Recent work in this area investigated the use of neural networks (which represent a class of general function approximators) as certificates~\cite{DBLP:conf/hybrid/EdwardsPA24}. 

Obtaining a model of the system, however, is generally a difficult task. 
It requires domain-specific knowledge and, since parts of the system may often not be known exactly, such a model will only constitute an approximation. 
To alleviate these issues, we turn to model-free data-driven techniques.  
One method for employing data in certificate synthesis is through the use of state pairs (i.e. states, and next-states), sampled from across the domain of interest. 
Such techniques are investigated in~\cite{DBLP:journals/tac/NejatiLJSZ23} for deterministic systems, and in~\cite{DBLP:journals/automatica/SalamatiLSZ24} for stochastic systems. 
Both these works make use of the techniques in~\cite{DBLP:journals/jota/KanamoriT12,6832537} to bound the distance between what is referred to as a robust program, and its sample-based counterpart. 
As discussed in~\cite[Remark 3.9]{6832537}, such techniques exhibit an exponential growth in the dimension of the sampling space (here the state space), and also require access to the underlying probability distribution. 

Alternatively, one can consider using entire trajectories as samples, hence only requiring access to runs compatible with the dynamics of the system. 
This is explored in~\cite{DBLP:journals/corr/abs-2502-05510} for discrete-time systems. 
This work leverages the so-called scenario approach~\cite{Scen_approach_book,DBLP:journals/mp/GarattiC22} and the notion of compression \cite{DBLP:journals/tac/MargellosPL15,DBLP:journals/jmlr/CampiG23}, and provides a constructive algorithm for the general pick-to-learn framework~\cite{DBLP:conf/nips/PaccagnanCG23}, to provide a probably approximately correct (PAC) bound on the correctness of certificates for newly sampled discrete trajectories. 
However, these guarantees are no longer valid when it comes to continuous-time systems.
Here, we extend these developments to allow constructing a safety/barrier certificate based on a finite number of discretised trajectories of the continuous-time system; however, we provide guarantees on the safety of a new unseen continuous-time trajectory. 
The latter does not follow from the results in~\cite{DBLP:journals/corr/abs-2502-05510}.

Our contributions can be summarised as follows:
\begin{itemize}
    \item We develop novel theoretical results that extend the results of~\cite{DBLP:journals/corr/abs-2502-05510} (which was limited in scope to discrete-time systems) to verify entire continuous trajectories, using discretised approximations for computations;
    \item We complement the results in \cite{DBLP:journals/tac/NejatiLJSZ23}, following an alternative approach for data-driven certificate synthesis which, however, does not scale exponentially with respect to the dimension of the underlying state space; 
    \item Our approach is entirely data-driven, avoiding the use of satisfiability modulo theories (SMT) solvers, which are computationally expensive and require a system model.
\end{itemize}


\emph{Notation.}
We use $\{\xi_k\}_{k=0}^K$ to denote a sequence indexed by $k \in \{0,1,\dots,K \}$. $B \models \psi$ defines condition satisfaction, i.e., it evaluates to true if the quantity $B$ on the left satisfies the condition $\psi$ on the right.  Using $\not\models$ represents the logical inverse of this (i.e., condition dissatisfaction). By $(\forall \xi \in \Xi) B\models\psi(\xi)$ we mean that some quantity $B$ satisfies a condition $\psi$ which, in turn, depends on some parameter $\xi$, for all $\xi \in \Xi$. 
We use $\xi_{[0,k]}$ to refer to a subsequence $\{\xi_0,\dots,\xi_k\}$ of a sequence.

\section{Safety and Barrier Certificates}
\label{sec:certs}

We focus on barrier certificates as the tool to verify safety; however, our techniques naturally extend to more complex certificates, as in \cite{DBLP:journals/corr/abs-2502-05510}. 
Consider a bounded state space $X \subseteq \mathbb{R}^n$, and a continuous-time dynamical system whose evolution starts at an initial state $x(0) \in X_I$, where $X_I\subseteq X$ denotes the set of possible initial conditions. 
From an initial state, we unravel a finite trajectory, i.e., a continuous sequence of states $\xi = \{x(t)\}_{t\in[0,T]}$, where $T\in \mathbb{R}$ and $x(t) \colon [0,T] \rightarrow \mathbb{R}^n$, by following the dynamics
\begin{equation}
\label{eq:Dyn}
    \dot{x}(t)=f(x(t)).
\end{equation}
We only require $f \colon X \rightarrow \mathbb{R}^n$ to be Lipschitz continuous, thus allowing for existence and uniqueness of solutions. 
The set of all possible trajectories $\Xi \subseteq X_I \times X^{(0,T]}$ is then the set of all trajectories starting from the initial set $X_I$.

In Section~\ref{sec:learn_certs}, we discuss how to use a finite set of trajectories to build safety certificates, and accompany them with generalisation guarantees with respect to their validity for a new, unseen trajectory. 
We will also consider time-discretised versions of continuous-time trajectories. To this end, 
define the time-discretised trajectory
\begin{equation}
    \tilde{\xi}=\{x(t)\}_{t \in \{0, t_1, \dots, t_M\}}\in \tilde{\Xi} \subseteq X_I \times X^M,
\end{equation}
for $M$ sampled time steps $t_1, \dots, t_M$. 

\begin{property}[Safety]\label{prop:safe}
   Consider \eqref{eq:Dyn}, and let $X_I, X_U \subset X$ with $X_I \cap X_U = \emptyset$ denote an initial and an unsafe set, respectively. 
   We say that $\phi$ encodes a safety property for a trajectory $\xi \in \Xi$ if,
    $$
        \phi(\xi) \defeq \forall t \in [0,T], x(t) \notin X_U.
    $$
\end{property}
By the definition of $\phi$, it follows that verifying that a trajectory exhibits the safety property is equivalent to verifying that a trajectory emanating from the initial set avoids the unsafe set for all time instances, until the horizon $T<\infty$. 


We now define the relevant criteria for a certificate $B$ to verify a safety property.
Assume that $B$ is continuous, so that when considering the supremum/infimum of $B$ over $X$ (already assumed to be bounded) or over some of its subsets, this is well-defined. Consider: 
\begin{align}
	\label{eq:barr_states1}
    &B(x) \leq 0 , \forall x \in X_I,\\
	\label{eq:barr_states2}
    &B(x) > 0, \forall x \in X_U,\\
        &\eval{\frac{dB}{dt}}_{x\in\xi} < \frac{1}{T} \Big (\inf_{x \in X_U}B(x)-\sup_{x \in X_I}B(x) \Big), \label{eq:barr_traj}
\end{align}
where $ \frac{dB}{dt} = \frac{\partial B}{\partial x}f(x)$,
and hence recognise that this depends on the system dynamics $f(x)$.
\begin{assum}[Lipschitz Derivative]\label{ass:Lip}
    Assume that $\frac{dB}{dt}$ is Lipschitz continuous.
\end{assum}
Note that even if $\inf_{x \in X_U}B(x)-\sup_{x \in X_I}B(x) > 0$, i.e., if the last condition encodes an increase of $B$ along the system trajectories, the system still avoids entering the unsafe set. This is established in the proof of Proposition~\ref{cert:barr} below.


Denote by $\psi^s$ the conjunction of \eqref{eq:barr_states1} and \eqref{eq:barr_states2}, and by $\psi^\Delta(\xi)$ the property in \eqref{eq:barr_traj}. 
Notice that the latter depends on $\xi$ as it relates to the derivative along a trajectory. 
We define a barrier certificate as follows.
\begin{defn}[Property Verification \& Certificates]\label{prob:property_cert}
    Given a safety property $\phi(\xi)$, and a function $B\colon \mathbb{R}^n \rightarrow \mathbb{R}$, let $\psi^s$ and $\psi^\Delta (\xi)$ be conditions such that, if$$
            \exists B \colon B \models \psi^s\wedge (\forall \xi \in \Xi) B\models\psi^\Delta(\xi) \implies \phi(\xi), \forall \xi \in \Xi,$$
    then the property $\phi$ is verified for all $\xi \in \Xi$. 
    We then say that such a function $B$ is a \emph{barrier certificate}. 
\end{defn}
   
In words, the implication of Definition \ref{prob:property_cert} is that if a barrier certificate $B$ satisfies the conditions in $\psi^s$, as well as the conditions in $\psi^\Delta (\xi)$, for all $\xi \in \Xi$, then the safety property $\phi(\xi)$ is satisfied for all trajectories $\xi \in \Xi$. 

\begin{prop}[Safety/Barrier Certificate]
\label{cert:barr}
    A function $B \colon \mathbb{R}^n \rightarrow \mathbb{R}$ is a safety/barrier certificate if
	 \begin{equation}
	     B \models \psi^s \wedge (\forall \xi \in \Xi) B\models\psi^\Delta(\xi).
	 \end{equation}
\end{prop}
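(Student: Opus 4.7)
The plan is to argue by contradiction: assuming $B$ satisfies $\psi^s$ and $\psi^\Delta(\xi)$ for all $\xi \in \Xi$, suppose that for some $\xi \in \Xi$ there exists $t^\star \in [0,T]$ with $x(t^\star) \in X_U$, and derive a contradiction by integrating the derivative bound in \eqref{eq:barr_traj} along the trajectory via the fundamental theorem of calculus.

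First I would extract from $\psi^s$ the observation that $d \defeq \inf_{x \in X_U} B(x) - \sup_{x \in X_I} B(x) \geq 0$, since \eqref{eq:barr_states1} bounds $\sup_{x \in X_I} B(x)$ above by $0$ while \eqref{eq:barr_states2} bounds $\inf_{x \in X_U} B(x)$ below by $0$. In particular, this gives the monotonicity $\frac{t^\star}{T}d \leq d$ for every $t^\star \in [0,T]$, which will be the lever that produces the contradiction.

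Next I would write, along the trajectory $\xi$,
\begin{equation*}
B(x(t^\star)) = B(x(0)) + \int_0^{t^\star} \eval{\frac{dB}{dt}}_{x = x(s)} ds,
\end{equation*}
where the integrand is continuous in $s$ by Assumption \ref{ass:Lip} together with continuity of $x(\cdot)$ (itself a consequence of Lipschitz continuity of $f$). Applying the pointwise bound in \eqref{eq:barr_traj} under the integral then yields $B(x(t^\star)) < B(x(0)) + \frac{t^\star}{T} d \leq \sup_{x \in X_I} B(x) + \frac{t^\star}{T} d$. Since by assumption $x(t^\star) \in X_U$, we also have $B(x(t^\star)) \geq \inf_{x \in X_U} B(x) = \sup_{x \in X_I} B(x) + d$. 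Combining the two inequalities forces $d(1 - t^\star/T) < 0$, which contradicts $d \geq 0$ and $t^\star \leq T$. Hence no such $t^\star$ exists, so $\phi(\xi)$ holds for every $\xi \in \Xi$, establishing that $B$ is a barrier certificate in the sense of Definition \ref{prob:property_cert}.

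The main subtlety, small as it is, lies in preserving the strict inequality when passing from the pointwise derivative bound in \eqref{eq:barr_traj} to its integrated counterpart; this step relies on continuity of the integrand ensured by Assumption \ref{ass:Lip}, so that a strict pointwise bound between continuous functions does yield a strict integral bound on intervals of positive length. Corner cases slot in smoothly: for $t^\star = 0$ the assumption already contradicts $X_I \cap X_U = \emptyset$, while for $d = 0$ the derivative condition makes $B$ strictly decreasing along the trajectory, so $B(x(t^\star)) < B(x(0)) \leq 0$, which is incompatible with $B(x(t^\star)) > 0$ on $X_U$.
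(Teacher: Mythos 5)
Your proof is correct and takes essentially the same route as the paper's: integrate the pointwise bound \eqref{eq:barr_traj} along the trajectory and combine it with $\sup_{x\in X_I}B(x)\le 0$ and $\inf_{x\in X_U}B(x)\ge 0$ from $\psi^s$. The paper simply states the conclusion directly, as $B(x(t)) < \inf_{x\in X_U}B(x)$ for all $t\in[0,T]$, rather than packaging the same computation as a contradiction.
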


\begin{proof}
It suffices to show that satisfaction of \eqref{eq:barr_traj} implies safety.
     Integrating \eqref{eq:barr_traj} up to $t\leq T$, we obtain 
     \begin{align}
         &B(x(t)) < B(x(0)) + \frac{t}{T} \Big( \inf_{x \in X_U}B(x)-\sup_{x \in X_I}B(x) \Big ) \nonumber \\
         &\leq \frac{T-t}{T} \sup_{x \in X_I}B(x) + \frac{t}{T} \inf_{x \in X_U}B(x) \nonumber \\
         &\leq \frac{t}{T} \inf_{x \in X_U}B(x) \leq \inf_{x \in X_U}B(x). \label{eq:proof_safety}
     \end{align}
     where the second inequality is since $B(x(0)) \leq \sup_{x \in X_I}B(x)$, as $x(0) \in X_I$. The third inequality is since $\sup_{x \in X_I}B(x) \leq 0$ due to \eqref{eq:barr_states1}, and the last one is since $t\leq T$.
     We thus have
           \begin{align}
       B(x(t)) &< \inf_{x \in X_U}B(x),~ t\in[0,T],
       \end{align}
       i.e. the maximum value along a trajectory is less than the infimum over the unsafe region and hence $x(t) \notin X_U, t=[0,T]$ (notice that $x(0) \notin X_U$ holds since $X_I \cap X_U = \emptyset$). The latter implies that all trajectories that start in $X_I$ avoid entering the unsafe set $X_U$, thus concluding the proof. 
       \end{proof}

To synthesise a certificate, we require complete knowledge of the behaviour $f$ of the dynamical system, to allow us to evaluate the derivative $\frac{dB}{dt}$.  
This may be impractical, and we therefore use data-driven techniques to learn a certificate.
\section{Data-Driven Certificate Synthesis}
\label{sec:learn_certs}

For our analysis, we will treat the initial state as random, distributed according to $\mathbb{P}$ (an appropriate probability space is defined; we gloss the technical details here in the interest of space). 
The support of $\mathbb{P}$ will be the set of admissible initial states (i.e. the initial set $X_I$).

We consider $N$ independent and identically distributed (i.i.d.) initial conditions, sampled according to $\mathbb{P}$, namely 
$\{x^i(0)\}_{i=1}^N \sim \mathbb{P}^N$.
Initializing the dynamics from each of these initial states, we unravel a set of continuous-time trajectories $\{\xi^i\}_{i=1}^N$, also referred to as a multi-sample.  
Since there is no stochasticity in the dynamics, we can equivalently say that trajectories (generated from the random initial conditions) are distributed according to the same probabilistic law; hence, with a slight abuse of notation, we write $\xi\sim \mathbb{P}$.
 We impose the following assumption, standard in the scenario approach~\cite{DBLP:journals/jmlr/CampiG23}. 
 Intuitively, this assumption rules out degenerate problem instances, where we could select the same sample multiple times with nonzero probability.
\begin{assum}[Non-concentrated Mass]\label{ass:non-conc_mass}
	Assume that $\mathbb{P}\{\xi \}=0$, for any $\xi \in \Xi$.
\end{assum}

Since we are now dealing with a sample-based problem, we construct probabilistic certificates and hence probabilistic guarantees on the satisfaction of a given property. 

\subsection{The Discrete-Time Case}

Designing safety certificates that enjoy probabilistic guarantees when it comes to new unseen trajectories has been recently considered in~\cite{DBLP:journals/corr/abs-2502-05510}, however, for discrete-time systems. We review the main developments in this direction, extending these to continuous-time ones. 
Consider a mapping $\mathcal{A}$ such that $B_N = \mathcal{A}(\{\xi^i\}_{i=1}^{N})$ as an algorithm that, based on $N$ samples, returns a certificate $B_N$. 
We call a \emph{compression set} of such an algorithm any subset of the input multi-sample that returns the same certificate.
That is, a sample subset $\mathcal{C}_N \subseteq \{\xi^i\}_{i=1}^{N}$ is a compression set if $\mathcal{A}(\mathcal{C}_N) = \mathcal{A}(\{\xi^i\}_{i=1}^{N})$.
In Algorithm~\ref{algo:sub}, we provide a specific synthesis procedure through which $\mathcal{A}$ (and hence the certificate $B_N$) can be constructed. This algorithm is extended from \cite{DBLP:journals/corr/abs-2502-05510} to continuous-time systems and is discussed in the next section.

\begin{thm}[Probabilistic Guarantees~\cite{DBLP:journals/corr/abs-2502-05510}]
\label{thm:Guarantees}
Consider Assumption~\ref{ass:non-conc_mass}, and let $B_N$ and $\mathcal{C}_N$ be the certificate and compression set, respectively, returned by Algorithm~\ref{algo:sub}.
Fix $\beta \in (0,1)$, and for $k<N$, 
let $\varepsilon(k,\beta,N)$ be the (unique) solution to the polynomial equation in the interval $[k/N,1]$
    \begin{align}
               \frac{\beta}{2N} \sum_{m=k}^{N-1}&\frac{\binom{m}{k}}{\binom{N}{k}}(1-\varepsilon)^{m-N} \nonumber \\
               &+\frac{\beta}{6N}\sum_{m=N+1}^{4N}\frac{\binom{m}{k}}{\binom{N}{k}}(1-\varepsilon)^{m-N} = 1,
     \end{align}
    while for $k=N$ let $\varepsilon(N,\beta,N) =1$. We then have that
    \begin{align}
	    \label{eq:cert_bound}
        &\mathbb{P}^N\big\{ \{\tilde{\xi}^i\}_{i=1}^N \in \tilde{\Xi}^N:~  \\
        &\mathbb{P}\{\tilde{\xi} \in \tilde{\Xi}\colon B_N \not\models \psi^s \wedge \psi^\Delta(\tilde{\xi})) \} \leq \varepsilon(C_N,\beta,N)\big\} \nonumber \geq 1-\beta,
    \end{align}
    where $C_N = |\mathcal{C}_N|$ is the cardinality of the compression set.
\end{thm}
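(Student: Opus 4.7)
The plan is to invoke the compression-based PAC bound of \cite{DBLP:journals/corr/abs-2502-05510} directly, after checking that our setup fits its hypotheses. The essential ingredients are: (i) the algorithm $\mathcal{A}$, together with $\mathcal{C}_N$ as defined by the property $\mathcal{A}(\mathcal{C}_N) = \mathcal{A}(\{\tilde{\xi}^i\}_{i=1}^N)$, constitutes a \emph{compression scheme} in the sense of \cite{DBLP:journals/tac/MargellosPL15,DBLP:journals/jmlr/CampiG23}; and (ii) Assumption~\ref{ass:non-conc_mass} supplies the non-concentration condition that the scenario/compression machinery requires in order to rule out probability mass concentrating on individual trajectories.

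The first step I would take is to introduce the violation quantity $V(B_N) \defeq \mathbb{P}\{\tilde{\xi} \in \tilde{\Xi}: B_N \not\models \psi^s \wedge \psi^\Delta(\tilde{\xi})\}$, and interpret it as a random variable that depends on the multi-sample only through $B_N$; the theorem then becomes a tail bound on $V(B_N)$ parametrized by the realized compression cardinality $C_N$. The second step is to apply the pick-to-learn PAC construction of \cite{DBLP:conf/nips/PaccagnanCG23,DBLP:journals/corr/abs-2502-05510}: for each fixed $k \in \{0,\dots,N-1\}$, the function $\varepsilon(k,\beta,N)$ solving the stated polynomial equation is derived as the smallest $\varepsilon$ for which a supermartingale-type identity across the pick-to-learn iterates closes at confidence $1-\beta$. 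The split of the sum at $m = N$, together with the weights $\beta/(2N)$ and $\beta/(6N)$, is precisely what lets a union bound over all possible realized values of $C_N$ hold while maintaining an overall $1-\beta$ confidence; the boundary case $C_N = N$ is trivially handled by the convention $\varepsilon(N,\beta,N) = 1$.

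The main obstacle is verifying point (i) above: one must confirm that Algorithm~\ref{algo:sub} (presented outside this excerpt) genuinely induces a compression scheme, in the sense that its output is invariant both to the removal of samples outside $\mathcal{C}_N$ and to permutations of the multi-sample. This typically requires inspecting the algorithm's iterative structure and arguing that the sequence of certificates it produces, along with the identification of which trajectories are ``critical'', depends on the training data only through $\mathcal{C}_N$. A secondary technical issue is measurability of the violation event, which is standard and can be discharged by the fact that $B_N$ is a neural network whose parameters depend in a measurable way on the training data, combined with the continuity of $B$ in the state.
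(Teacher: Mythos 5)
Your proposal is correct and takes essentially the same route as the paper: Theorem~\ref{thm:Guarantees} is imported from \cite{DBLP:journals/corr/abs-2502-05510} (as its header indicates), and the paper offers no independent proof beyond that citation, so invoking the compression/pick-to-learn PAC bound after checking that Algorithm~\ref{algo:sub} induces a valid compression set and that Assumption~\ref{ass:non-conc_mass} supplies the required non-concentration is exactly what is intended. Your identification of the compression-scheme property of Algorithm~\ref{algo:sub} (invariance of the output under removal of non-compression samples) as the main thing to verify correctly locates where the substantive work lies.
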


\begin{remark}
Theorem~\ref{thm:Guarantees} is not applicable to continuous-time trajectories.
In particular, the inner probability refers to the probability of a new time-discretised trajectory $\tilde{\xi}$ being (un)safe. As a result, 
states along the state trajectory $\xi$ of the continuous-time system under study may violate the safety property (chiefly, between sampled states $x(t_i)$ and $x(t_{i+1})$).
\end{remark}

\subsection{Extension to the Continuous-Time Case}
In this section we show how to extend Theorem~\ref{thm:Guarantees} to offer guarantees on newly sampled \emph{continuous-time trajectories}, even though, to allow for practical applications, the barrier certificate we will construct will still be based only on time-discretised ones. 
To achieve this, we replace \eqref{eq:barr_traj} by a discretised version based on a first-order Euler approximation. 
Moreover, we tighten it to enforce a increase upper bound condition between sample times.
Denote this condition (over discretised trajectories) as $\psi^\Delta_d(\tilde{\xi})$, defined by the inequality
\begin{align}
     \max_{k = 1,\dots,M} \label{eq:decrease_tight}&\frac{B(x(t_k))-B(x(t_{k-1}))}{t_k-t_{k-1}} \\&<\frac{1}{T} \Big (\inf_{x \in X_U}B(x)-\sup_{x \in X_I}B(x) \Big)-d,\nonumber
\end{align}
where $d \in \mathbb{R}$ is a tightening parameter.
   Define by $\mathcal{L}_B$ and $\mathcal{L}_f$ the Lipschitz constants of the certificate derivative $\frac{\partial B_{\theta^\star}}{\partial x}$ and of the dynamics $f(x)$ respectively, by $\mathcal{M}_B, \mathcal{M}_f$ bounds on their norms, i.e. $\sup_x\|\frac{\partial B_{\theta^\star}}{\partial x}\| \leq \mathcal{M}_B$ and $\sup_x\|f(x)\|\leq\mathcal{M}_f $, and for $\overline{t} = \max_{k=1,\dots,M}(t_{k}-t_{k-1})$, define $d$ as
   \begin{equation}
   \label{eq:dgap}
       d= \overline{t}\mathcal{M}_f\left(\mathcal{M}_B\mathcal{L}_f+\mathcal{M}_f\mathcal{L}_B \right).
   \end{equation}

\begin{thm}[Continuous-Time Guarantees]
\label{thm:CT_PAC}
Consider the conditions of Theorem \ref{thm:Guarantees}, Assumption \ref{ass:Lip} and \eqref{eq:dgap}. Then, 
   \begin{align}
       &\mathbb{P}^N\big\{\{\tilde{\xi}^i\}_{i=1}^N \in \tilde{\Xi}^N\colon \\ 
       & \mathbb{P}\{\xi \in \Xi\colon B_N \not\models \psi^s \wedge \psi^\Delta(\xi)) \} \leq \varepsilon(C_N,\beta,N)\big\} \geq 1-\beta. \nonumber
   \end{align}
\end{thm}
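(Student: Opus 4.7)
The plan is to reduce Theorem~\ref{thm:CT_PAC} to Theorem~\ref{thm:Guarantees} instantiated with the tightened discretised condition $\psi^\Delta_d$, combined with a deterministic implication showing that $\psi^\Delta_d(\tilde{\xi})$ forces $\psi^\Delta(\xi)$ on the underlying continuous trajectory. Since Algorithm~\ref{algo:sub} enforces $\psi^\Delta_d$ and Theorem~\ref{thm:Guarantees} is a compression-based PAC statement that is agnostic to the particular property being enforced, I would first apply it with $\psi^s \wedge \psi^\Delta_d$ in place of $\psi^s \wedge \psi^\Delta$. This yields, with $\mathbb{P}^N$-probability at least $1-\beta$ over the multi-sample,
$$\mathbb{P}\{\tilde{\xi}\in\tilde{\Xi}\colon B_N \not\models \psi^s \wedge \psi^\Delta_d(\tilde{\xi})\} \leq \varepsilon(C_N,\beta,N).$$

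The crux of the argument is the deterministic step: for any initial condition, satisfaction of $\psi^\Delta_d$ on the time-discretised trajectory $\tilde{\xi}$ implies satisfaction of $\psi^\Delta$ on the full continuous trajectory $\xi$. I would fix an arbitrary $t \in [0,T]$ and let $[t_{k-1},t_k]$ be the sampling interval containing $t$. The mean value theorem applied to $s \mapsto B(x(s))$ on this interval gives some $\tau \in [t_{k-1},t_k]$ with
$$\frac{B(x(t_k))-B(x(t_{k-1}))}{t_k-t_{k-1}} = \eval{\frac{dB}{dt}}_{x(\tau)}.$$
Expanding $\frac{dB}{dt} = \frac{\partial B}{\partial x} f$ and applying the triangle inequality with the Lipschitz constants $\mathcal{L}_B,\mathcal{L}_f$ and norm bounds $\mathcal{M}_B,\mathcal{M}_f$, together with $\|x(t)-x(\tau)\| \leq \mathcal{M}_f|t-\tau| \leq \mathcal{M}_f\overline{t}$ (which follows from $\dot{x}=f(x)$ and $\|f\|\leq \mathcal{M}_f$), I would obtain
$$\left|\eval{\frac{dB}{dt}}_{x(t)} - \eval{\frac{dB}{dt}}_{x(\tau)}\right| \leq \overline{t}\mathcal{M}_f(\mathcal{M}_B\mathcal{L}_f+\mathcal{M}_f\mathcal{L}_B) = d,$$
which matches exactly the tightening defined in \eqref{eq:dgap}. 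Chaining this with \eqref{eq:decrease_tight} then yields $\eval{\frac{dB}{dt}}_{x(t)} < \frac{1}{T}(\inf_{x \in X_U}B(x) - \sup_{x \in X_I}B(x))$, and since $t \in [0,T]$ was arbitrary, this is precisely $\psi^\Delta(\xi)$ as in \eqref{eq:barr_traj}.

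The two pieces then combine through a simple set-inclusion argument: viewing both events as subsets of the common initial-condition probability space (valid since $\xi$ and $\tilde{\xi}$ are deterministic functions of $x(0)$), the implication above gives $\{\xi\colon B_N \not\models \psi^s \wedge \psi^\Delta(\xi)\} \subseteq \{\tilde{\xi}\colon B_N \not\models \psi^s \wedge \psi^\Delta_d(\tilde{\xi})\}$, and monotonicity of $\mathbb{P}$ transfers the bound obtained in the first step to the inner probability in the statement, completing the proof. I expect the main obstacle to be the tightness of the discretisation estimate in the second step: the triangle-inequality decomposition of $\frac{\partial B}{\partial x}(x(t))f(x(t)) - \frac{\partial B}{\partial x}(x(\tau))f(x(\tau))$ must be performed so as to recover exactly the quantity in \eqref{eq:dgap}, as a looser bound would force a larger $d$ and correspondingly restrict the class of certificates that Algorithm~\ref{algo:sub} can successfully return. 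Assumption~\ref{ass:Lip} together with the prescribed Lipschitz and norm bounds are precisely what is needed to make this step exact.
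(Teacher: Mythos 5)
Your proposal is correct and follows essentially the same route as the paper: apply Theorem~\ref{thm:Guarantees} to the tightened discretised property $\psi^\Delta_d$, show deterministically that the discretisation gap is at most $d$ as in \eqref{eq:dgap} via the Lipschitz bound $\mathcal{M}_B\mathcal{L}_f+\mathcal{M}_f\mathcal{L}_B$ on $x\mapsto\frac{\partial B}{\partial x}f(x)$ together with $\|x(t)-x(\tau)\|\leq\mathcal{M}_f\overline{t}$, and conclude by set inclusion on the common initial-condition space. The only (immaterial) difference is that you invoke the mean value theorem to equate the finite-difference quotient with the derivative at some $\tau\in[t_{k-1},t_k]$, whereas the paper writes the difference as an integral of $\frac{dB}{dt}$ over the interval; both yield the same bound.
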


The proof of this is achieved by bounding the difference between the safety of the continuous-time trajectories and their time-discretised approximations; see Appendix. 

\section{Certificate Synthesis}
\label{sec:training}

To learn a barrier certificate from samples, we consider a neural network, a well-studied class of function approximators that generalise well to a given task, although any other function approximation mechanism could be chosen instead. 
We consider a network with $N_\mathrm{layers}$ layers, and $N_\mathrm{neurons}$ neurons per layer.
Denote all tunable neural network parameters by a vector $\theta$ $\in \mathbb{R}^{N_\mathrm{layers} N_\mathrm{neurons}}$. 
We then have that our certificate $B_N$ depends on $\theta$. 
For the results of this section, we write $B_\theta$ and drop the dependency on $N$ to ease notation. 

\begin{algorithm*}
\caption{Certificate Synthesis and Compression Set Computation}
\vspace{0.2cm}
\label{algo:sub}
    \begin{algorithmic}[1]
    \Function{}{}$\mathcal{A}(\theta,\mathcal{D})$
    \State Set $k \gets 1$ \Comment{Initialise iteration index}
            \State Set $\mathcal{C}\leftarrow \emptyset$ 
            \Comment{Initialise compression set}
            \State Fix $L_1 < L_0$ with $|L_1-L_0|>\eta$ \Comment{$\eta$ is any fixed tolerance}
            \While{$l^s(\theta)>0$} \Comment{While sample-independent state loss is non-zero}
                \State  $g \gets \nabla_\theta l^s(\theta)$  \Comment{Gradient of loss function} \label{line:state_grad}
                \State $\theta \leftarrow \theta-\alpha g$ \Comment{Step in the direction of sample-independent gradient} \label{line:state_step}
            \EndWhile
            \Statex \vspace{-0.35cm}\hrulefill
	    \While{$|L_k-L_{k-1}|>\eta $} \Comment{Iterate until tolerance is met} \label{line:while}
            \State $\mathcal{M} \gets \{\tilde{\xi} \in \mathcal{D} \colon L(\theta,\tilde{\xi}) \geq \max_{\tilde{\xi} \in \mathcal{C}} L(\theta,\tilde{\xi}) \}$ \Comment{Find maximal samples with loss greater than compression set loss} \label{line:max_D}
            \State  $\overline{g}_\mathcal{M} \gets \{\nabla_\theta L(\theta,\tilde{\xi})\}_{\tilde{\xi} \in \mathcal{M}}$  \Comment{ Subgradients of loss function for $\tilde{\xi} \in  \mathcal{M}$} \label{line:subgrad_D}
            \State $\overline{\xi}_{\mathcal{C}} \in \argmax_{\tilde{\xi} \in \mathcal{C}} L(\theta,\tilde{\xi})$ \Comment{Find a sample with maximum loss from $\mathcal{C}$} \label{line:max_C}
            \State $\overline{g}_{\mathcal{C}}  \gets \nabla_\theta L(\theta,\overline{\xi}_{\mathcal{C}})$ \Comment{Approximate subgradient of loss function for $\tilde{\xi} = \overline{\xi}_{\mathcal{C}}$} \label{line:subgrad_C}
            \Statex \vspace{-0.25cm}\hrulefill
             \If{$\exists \overline{g} \in \overline{g}_\mathcal{M} \colon \langle \overline{g},\overline{g}_{\mathcal{C}} \rangle \leq 0 \wedge \overline{g} \neq 0$ \label{line:inner}} \Comment{If there is a misaligned subgradient (take the maximum if multiple)}
                \State $\theta \leftarrow \theta-\alpha\overline{g}$ \Comment{Step in the direction of misaligned subgradient} \label{line:true_step}
                \State $\mathcal{C} \leftarrow \mathcal{C} \cup \{\overline{\xi}\}$ \Comment{Update compression set with sample corresponding to $\overline{g}$} \label{line:update_C}
            \Else
                \State $\theta \leftarrow \theta-\alpha\overline{g}_{\mathcal{C}}$ \Comment{Step in the direction of approximate subgradient} \label{line:approx_step}
            \EndIf \label{line:endif}
            \Statex \vspace{-0.35cm}\hrulefill
	    \State $L_k \gets \min\left\{ L_{k-1}, \max_{\tilde{\xi} \in \mathcal{D}} L(\theta,\tilde{\xi})\right\}$ \Comment{Update ``running'' loss value} \label{line:update_L}
        \State $k \gets k+1$ \Comment{Update iteration index} \label{line:update_k}
            \EndWhile
            \State \Return $\theta, \mathcal{C}$
            \EndFunction
    \end{algorithmic}
\end{algorithm*}

\subsection{Certificate Synthesis Algorithm}


We seek to minimise a loss function that encodes the barrier certificate conditions, with respect to the neural network parameters. For $\tilde{\xi} \in \tilde{\Xi}$ and parameter vector $\theta$, let \begin{equation}
    L(\theta,\tilde{\xi})=l^\Delta(\theta,\tilde{\xi})+ l^s(\theta),\label{eq:opt_prob}
\end{equation} represent an associated loss function comprised of sample-dependent loss $l^\Delta$, and sample-independent loss $l^s$
\begin{align}
    l^s(\theta) &\defeq \frac{1}{|\mathcal{X}_I|}\sum_{x \in \mathcal{X}_I}\max\{0,B_\theta(x)\}\nonumber \\
    &+\frac{1}{|\mathcal{X}_U|}\sum_{x \in \mathcal{X}_U} \max\{0,\delta-B_\theta(x)\}.\nonumber\\
    l^\Delta(\theta, \tilde{\xi}) &\defeq -\frac{1}{T} \Big (\inf_{x \in \mathcal{X}_U}B_\theta(x)-\sup_{x \in \mathcal{X}_I}B_\theta(x) \Big) \nonumber\\&+\max_{k = 1,\dots,M} \frac{B_\theta(x(t_k))-B_\theta(x(t_{k-1}))}{t_k-t_{k-1}},\nonumber
    \end{align}
To instantiate these functions we consider a discrete set of $N_{\mathrm{states}}$ grid-points on each sub-domain: $\mathcal{X}_{I}$ is the set of points in the initial set, and $\mathcal{X}_U$ is the set of points in the unsafe set.
These points are generated densely enough across the domain of interest, and hence offer an accurate approximation. 
Since they do not require access to the dynamics, we consider them separately from $\{\tilde{\xi}^i\}_{i=1}^N$.

To see the choice of the loss functions, consider the first summation in $l^s(\theta)$, and notice that if $B_\theta(x) \leq 0$ then $\max\{0, B_\theta(x)\}=0$, i.e., no loss is incurred, implying satisfaction of \eqref{eq:barr_states1}. 
Similar reasoning relates the other summation and expression of $ l^\Delta(\theta, \tilde{\xi})$ to \eqref{eq:barr_states2} and \eqref{eq:decrease_tight}, respectively. 

The following mild assumption ensures a minimiser of the loss functions exists and hence our algorithm terminates.
\begin{assum}[Minimisers' Existence] \label{ass:exist}
For any $\{\tilde{\xi}\}_{i=1}^N$, and any nonempty $\mathcal{D} \subseteq \{\tilde{\xi}\}_{i=1}^N$, the set of minimisers of $\max_{\tilde{\xi} \in \mathcal{D}} L(\theta,\tilde{\xi}),$ is nonempty.
\end{assum}

Algorithm \ref{algo:sub} provides an inexact subgradient methodology to minimise the loss function, and to iteratively construct a compression set $\mathcal{C}$ (initially empty; see step $3$). 
We explain the main steps of this algorithm with reference to Figure \ref{fig:algorithm}, where each sample gives rise to a concave triangular constraint. 
After an arbitrary initialization, we follow the subgradient associated with the worst case sample and add it to the compression set $\mathcal{C}$ (step $14$--$15$, point $M_1$). 
When iterates get to point $M_2$, the subgradient step becomes inexact, as for the same parameter there exists a sample resulting in a higher loss (see asterisk). 
Such a sample is in $\mathcal{M}$, step $9$ of Algorithm \ref{algo:sub}. 
However, the algorithm does not ``jump'', as the condition in step $13$ of the algorithm is not yet satisfied. 
Graphically, this is since the $M_2$ and the red asterisk are on a side of the respective constraint with the same slope. 
As such the algorithm performs inexact subgradient descent steps up to point $M_3$; this is the first instance where the condition in step $13$ is satisfied (there exists another constraint with opposite slope) and hence the algorithm ``jumps'' to a point with higher loss and subgradient of opposite sign. 
This procedure is then repeated as shown in the figure, with the red line indicating the iterates' path. The ``jumps'' serve as an exploration step to investigate the non-convex landscape, while their number corresponds to the cardinality of the returned compression set. This can be thought of as a constructive procedure for the general framework presented in \cite{DBLP:conf/nips/PaccagnanCG23} to construct compression sets.

The computational complexity of one iteration of the main loop is $\mathcal{O}((M+N_{\mathrm{states}})\cdot N \cdot N_\mathrm{layers}\cdot N_\mathrm{neurons}^3)$.
We can upper bound the number of iterations by $\frac{L_0-L^\star}{\eta}$, where $L^\star$ is the value of the loss associated with one of the minimisers. To terminate our algorithm we require knowledge of $\dgap$ (which depends on $\theta^\star$).
To resolve this, we propose two approaches.
\begin{enumerate}
	\item At every iteration $j$ calculate $\dgap_j$ using the current best parameters $\theta_j$, terminate if $\max_i\left[L(\theta_j, \tilde{\xi}^i)\right] < -\dgap_j$. 
\item Choose a parameter set, and consider the supremum across that set to determine an upper bound on $\mathcal{L}_B, \mathcal{M}_B$. Use these upper bounds to calculate $\dgap$.
\end{enumerate}
The second option is likely to be conservative but removes the need for calculating Lipschitz constants in the loop, hence is computationally more efficient.
To determine Lipschitz constants for neural networks we refer to \cite{DBLP:conf/nips/FazlyabRHMP19,DBLP:conf/nips/VirmauxS18}.


\begin{figure}[h!]
\centering
\begin{tikzpicture}[>=latex, font=\large, scale = 1.15]   
\draw[line width=1.25pt] plot[domain=-4.2:-1] (\x,{-1*(\x + 4.2)+3.2});      
\draw[line width=1.25pt] plot[domain=-5.3:-4.2] (\x,{1*(\x + 4.2)+3.2});   
\fill[pattern=vertical lines] (-4.2,3.2) -- (-1,0) -- (-1,-0.3) -- (-4.2,2.9) -- cycle;      
\fill[pattern=vertical lines] (-5.3,2.1) -- (-4.2,3.2) -- (-4.2,2.9) -- (-5.3,1.8) -- cycle;

\draw[line width=1.25pt] plot[domain=-3.4:-0.93] (\x,{-1.6*(\x + 3.4)+3.96});      
\draw[line width=1.25pt] plot[domain=-5.3:-3.4] (\x,{1.6*(\x + 3.4)+3.96});   
\fill[pattern=vertical lines] (-3.4,3.96) -- (-0.93,0) -- (-0.93,-0.3) -- (-3.4,3.66) -- cycle;      
\fill[pattern=vertical lines] (-5.3,0.92) -- (-3.4,3.96) -- (-3.4,3.66) -- (-5.3,0.62) -- cycle;

\draw[line width=1.25pt] plot[domain=-0.5:0.9] (\x,{-0.85*(\x + 0.5)+3});      
\draw[line width=1.25pt] plot[domain=-4.02:-0.5] (\x,{0.85*(\x + 0.5)+3});   
\fill[pattern=vertical lines] (-0.5,3) -- (0.9,1.81) -- (0.9,1.51) -- (-0.5,2.7) -- cycle;      
\fill[pattern=vertical lines] (-4.02,0) -- (-0.5,3) -- (-0.5,2.7) -- (-4.02,-0.3) -- cycle;

\draw[line width=1.25pt] plot[domain=0.2:0.9] (\x,{-2.2*(\x - 0.2)+3.7});      
\draw[line width=1.25pt] plot[domain=-1.48:0.2] (\x,{2.2*(\x - 0.2)+3.7});   
\fill[pattern=vertical lines] (0.2,3.7) -- (0.9,2.16) -- (0.9,1.86) -- (0.2,3.4) -- cycle;      
\fill[pattern=vertical lines] (-1.48,0) -- (0.2,3.7) -- (0.2,3.4) -- (-1.48,-0.3) -- cycle;

%
\draw[->,thick] (-5.5,0)--(1.5,0) node[below]{};
\draw[->,thick] (-5.3,-0.2)--(-5.3,4.5) node[left]{Loss};
\draw[->,thick, red!80] (-4.1,3.25)-- (-3.25,2.4) node[below]{};
\draw[->,thick, red!80] (-3.25,2.4) -- (-2.4,1.55) node[below]{};
\draw[->,thick, red!80] (-2.4,1.55) -- (-1.55,0.7) node[below]{};
\draw[->,thick, red!80] (-1.55,0.7) --  (-1.55,2.25) node[below]{};
\draw[->,thick, red!80] (-1.55,2.25) -- (-2.2,1.7) node[below]{};
\draw[->,thick, red!80] (-2.2,1.7) -- (-2.2,2.25) node[below]{};
\draw[->,thick, red!80] (-2.2,2.25) -- (-1.95,1.85) node[below]{};
\filldraw[black] (-4.1,3.25) circle (2pt);
\filldraw[black] (-3.25,2.4) circle (2pt);
\filldraw[black] (-2.4,1.55) circle (2pt);
\filldraw[black] (-1.55,0.7) circle (2pt);
\filldraw[black] (-1.55,2.25) circle (2pt);
\filldraw[black] (-2.2,1.7) circle (2pt);
\filldraw[black] (-2.2,2.25) circle (2pt);
\draw[] (0.2,-0.2)node[below, bag]{Parameter Space};
\draw[]  (-4.1,3.3)node[above, bag]{$M_1$};
\draw[]  (-3.25,2.5)node[above, bag]{$M_2$};
\draw[red!80] (-3.25,3.8) node[above, bag]{$\large{\boldsymbol{\ast}}$};
\draw [->,ultra thick] (0,1.2) to [bend right=30] (-1.5,0.8);
\draw[]  (0.25,0.9) node[above, bag]{$M_3$};
\end{tikzpicture}
\caption{Graphical illustration of Algorithm~\ref{algo:sub}.}
\label{fig:algorithm}
\end{figure}



\subsection{Sample Discarding and Compression Computation}
Due to the tightening introduced, to minimise the loss function, we would like its value to be lower than or equal to $-d$. However, in some cases, the parameter returned by Algorithm \ref{algo:sub} may result in a loss value greater than $-d$, thus preventing us from verifying safety. To ensure the loss is below that threshold, we employ a sampling-and-discarding procedure \cite{DBLP:journals/jota/CampiG11,DBLP:journals/tac/RomaoPM23}, and introduce Algorithm \ref{algo:main} as an outer loop around Algorithm \ref{algo:sub}. 
Algorithm \ref{algo:main} iterates as long the desired loss value is not met (step $4$), and progressively discards samples (step $7$). 
The samples that are discarded are the ones identified as compression by Algorithm \ref{algo:sub} (step $6$), as removing these is bound to improve the solution. 

\begin{algorithm}[ht]
\caption{Compression Set Update with Discarding}
\vspace{0.2cm}
\label{algo:main}
\begin{algorithmic}[1]
\State Fix $ \{\tilde{\xi}^i\}_{i=1}^N$
    \State Set $\widetilde{\mathcal{C}}\gets \emptyset$\Comment{Initialise compression set}
    \State Set $\mathcal{D} \gets \{\tilde{\xi}^i\}_{i=1}^N$ \Comment{Initialise ``running'' samples}
    \While{$(\max_{\tilde{\xi} \in \mathcal{D}} l^\Delta(\theta, \tilde{\xi})>-\dgap) \bigvee (l^s(\theta) > 0) $}
            \State $\theta, \mathcal{C} \gets$ $\mathcal{A}(\theta,\mathcal{D})$ \Comment{Call Algorithm \ref{algo:sub}} \label{line:subgrad}
        \State $\widetilde{\mathcal{C}} \gets \widetilde{\mathcal{C}} \cup \mathcal{C}$ \Comment{{Update $\widetilde{\mathcal{C}}$}} \label{line:update_outer_C}
        \State  $\mathcal{D} \gets \mathcal{D} \setminus \widetilde{\mathcal{C}}$ \Comment{Discard $\widetilde{\mathcal{C}}$ from $\mathcal{D}$} \label{line:discard}
    \EndWhile
        \State \Return $\theta$, $\widetilde{\mathcal{C}}$
\end{algorithmic}
\end{algorithm}

The samples that are discarded need to be added to the compression set (step $6$); intuitively this is the case as if we repeat the procedure using only the compression samples we need to follow the same solution path. A direct byproduct of Algorithms~\ref{algo:sub} and~\ref{algo:main} is that they calculate a compression set, whose cardinality $C_N$ is in turn used in Theorem \ref{thm:CT_PAC} to provide the desired probabilistic guarantees. The cardinality of that set corresponds to the number of ``jumps'' in Algorithm~\ref{algo:sub} plus the number of discarded samples in Algorithm~\ref{algo:main}. 
This cardinality is not necessarily greater for higher dimensional problems, but is rather case dependent and depends on the complexity of the problem.  
For example, a problem where some trajectories approach or even enter the unsafe set presents a more challenging synthesis problem than one where trajectories all move in the opposite direction to the unsafe set, thus we expect the former to have a larger compression set even if the problem is smaller in dimension. 
\section{Numerical Results}
\label{sec:exp}
We consider constructing a safety certificate for the nonlinear, two-dimensional jet engine model as considered in~\cite{DBLP:journals/tac/NejatiLJSZ23},
\hypersetup{hidelinks}\blfootnote{The codebase is available at \url{https://github.com/lukearcus/fossil_scenario}}\hypersetup{pdfborder={0 0 1}}
\begin{equation}
\label{eq:jet}
        \dot{x}_1(t) = -x_2(t)-\frac{3}{2}x_1^2(t)-\frac{1}{2}x_1^3(t), \;\;
        \dot{x}_2(t) = x_1(t),
\end{equation}
using a time horizon $T=5$.
Figure~\ref{fig:jet_eng_comparison} provides a graphical representation of the dynamics, subdomains under study, the $0$-level set produced by our certificate, and the level sets calculated by the methods in \cite{DBLP:journals/tac/NejatiLJSZ23} (one lower bounding the unsafe set, the other upper bounding the initial set). 
We used $5$ independent repetitions (each with different multi-samples) of $1,000$ sampled trajectories and $367$ seconds of computation time (standard deviation $139$s), to obtain $\varepsilon = 0.01492$ (standard deviation $0.00140$) with confidence $0.99$.
The methodology of \cite{DBLP:journals/tac/NejatiLJSZ23} required $257149$ state pair samples and $5123$ seconds (standard deviation $449$s) of computation time to compute a barrier certificate with the same confidence (however, this holds deterministically). 
We estimate the Lipschitz constants using the methodology in \cite{DBLP:journals/jgo/WoodZ96}, noting that convergence is guaranteed asymptotically.
Figure~\ref{fig:jet_barr_surf} contains a 3D plot of the certificate.

 \begin{figure}[t]
    \begin{minipage}[t]{0.48\linewidth}
        	\includegraphics[width=\linewidth]{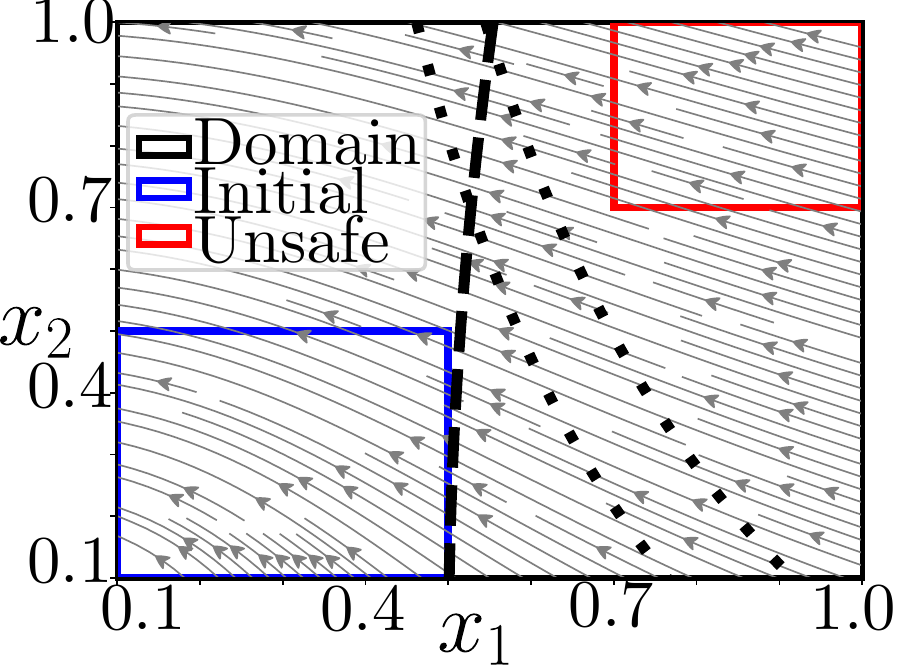}
	\caption{Phase plane plot, initial and unsafe set for \eqref{eq:jet}. The zero-level set for our certificate is dashed; level sets that bound the initial and unsafe sets in~\cite{DBLP:journals/tac/NejatiLJSZ23} are dotted.}
	\label{fig:jet_eng_comparison}
    \end{minipage}\hfill
     	\begin{minipage}[t]{0.48\linewidth}
	   \includegraphics[width=\linewidth]{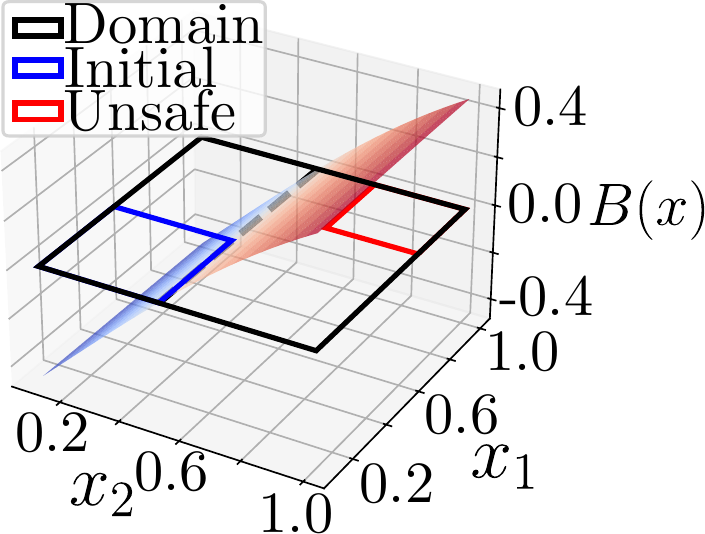}
	   \caption{Surface plot of the safety/barrier certificate, generated by our techniques, for the system of Figure \ref{fig:jet_eng_comparison}.}
	   \label{fig:jet_barr_surf}
    \end{minipage}
 \end{figure}

Beyond these numerical results, we briefly discuss the theoretical differences between our approach and \cite{DBLP:journals/tac/NejatiLJSZ23}. 
The results in \cite{DBLP:journals/tac/NejatiLJSZ23} offer a guarantee that, with a certain confidence, the safety property is \emph{always} satisfied, in contrast to Theorem \ref{thm:CT_PAC} where we provide such guarantees in probability (up to a quantifiable risk level $\varepsilon$). 
 However, these ``always'' guarantees, although very useful, come with some challenges. 
 Firstly, they are not applicable when part of the initial set is unsafe, whereas we can still bound the probability of a new trajectory being safe even if some of the sampled trajectories were unsafe.
Second, they implicitly require knowledge of the underlying probability distribution to instantiate their confidence bound and scale exponentially with respect to the state space dimension. 

Related to this last point, we 
 performed a comparison on the following four-dimensional system taken from~\cite{DBLP:conf/hybrid/EdwardsPA24}. 
\begin{equation}
\begin{aligned}
        \dot{x}_1(t) &= x_1(t) + \frac{x_1(t)  x_2(t)}{5} -\frac{x_3(t)x_4(t)}{2},\\
        \dot{x}_2(t) &= \cos(x_4(t)),\\
        \dot{x}_3(t) &= 0.01\sqrt{|x_1(t)|},\\
        \dot{x}_4(t) &= -x_1(t) - x_2(t)^2 + \sin(x_4(t)),
\end{aligned}
\end{equation}
using $T=4$.
We applied also the approach of \cite{DBLP:journals/tac/NejatiLJSZ23} which, with $10^{19}$ samples, returned a confidence $10^{-30}$, an uninformative result as it is close to zero. 
With only $100$ samples our techniques obtain a risk level $\varepsilon = 0.21450$ (standard deviation $0.00910$), confidence $1-10^{-5}$ (close to one). 

\section{Conclusion}

We have proposed a method for synthesis of neural-network certificates for continuous-time dynamical systems, based only on a finite number of trajectories from a system.   
Our numerical experiments demonstrate the efficacy of our methods on a number of examples, involving comparison with related methodologies in the literature. 
Current work concentrates towards extending to controlled systems, thus co-designing a controller and a certificate at the same time.




  \section*{APPENDIX}
 \setcounter{section}{0}
 \section{Proof of Theorem \ref{thm:CT_PAC}}

We aim at finding a bound on the discretisation gap $L(\theta,\xi)-L(\theta,\tilde{\xi})$ so that, for sufficiently small loss evaluated on the time-discretised approximations $L(\theta,\tilde{\xi})$, we also achieve a negative loss on the continuous trajectories $L(\theta,\xi)$.

\begin{align}
    &L(\theta,\xi)-L(\theta,\tilde{\xi})
    =l^\Delta(\theta,\xi)- l^\Delta(\theta,\tilde{\xi}) \nonumber \\
    &=\max_{x \in \xi} \eval{\frac{dB}{dt}}_{x}-\max_{k = 1,\dots,M} \frac{B(x(t_k))-B(x(t_{k-1}))}{t_k-t_{k-1}}. \label{eq:proof1}
    \end{align}
    Replace the first maximisation with one between time instances, and exchange the order of the $\max$ operators,
    \begin{align}
    &\max_{k=1,\dots,M}\max_{t\in[t_{k-1},t_k]}\eval{\frac{dB}{dt}}_{x(t)}\\&\qquad-\max_{k=1,\dots,M}\frac{B(x(t_k))-B(x(t_{k-1}))}{t_k-t_{k-1}},\nonumber\\
    &\leq\max_{k=1,\dots,M}\Bigg[\max_{t\in[t_{k-1},t_k]}\eval{\frac{dB}{dt}}_{x(t)}-\frac{B(x(t_k))-B(x(t_{k-1}))}{t_k-t_{k-1}}\Bigg].\label{eq:proof2}
    \end{align}
    We can now replace the difference term with an integral, 
    \begin{align}
            &\max_{k=1,\dots,M}\frac{\int_{t_{k-1}}^{t_{k}}\max_{t\in[t_{k-1},t_k]}\eval{\frac{dB}{dt}}_{x(t)}-\eval{\frac{dB}{dt}}_{x(\tau)}~\mathrm{d}\tau}{t_k-t_{k-1}}.\nonumber
    \end{align}
    Letting $\mathfrak{L} = \mathcal{M}_B\mathcal{L}_f+\mathcal{M}_f\mathcal{L}_B$ (refer to \eqref{eq:dgap} for the definition of the various constants), the previous derivations lead to
    \begin{align}
        &L(\theta,\xi)-L(\theta,\tilde{\xi}) \nonumber\\
        &\leq\max_{k=1,\dots,M}\frac{\int^{t_k}_{t_{k-1}}\|x(\tau)-\max_{t\in[t_{k-1},t_k]}x\|\mathfrak{L}~\mathrm{d}\tau}{t_k-t_{k-1}}\nonumber \\
        &\leq\max_{k=1,\dots,M}\mathfrak{L}\frac{\int_{t_k}^{t_{k-1}} \mathcal{M}_f (t_k-t_{k-1}) ~\mathrm{d}\tau}{t_k-t_{k-1}},\\
        &=\max_{k=1,\dots,M}\mathfrak{L}\int_{t_k}^{t_{k-1}} \mathcal{M}_f ~\mathrm{d}\tau=\overline{t}\mathfrak{L}\mathcal{M}_f.
    \end{align}
    where the second inequality is since $\sup_x\|f(x)\| \leq \mathcal{M}_f$, and the last one since $\overline{t} = \max_{k=1,\dots,M}(t_{k}-t_{k-1})$.

This results then to a discretisation gap as in \eqref{eq:dgap}.
By Theorem~\ref{thm:Guarantees}, and noticing that violating the conditions with $\psi^{\Delta}_d$ in place of $\psi^{\Delta}$, is equivalent to $L(\theta^\star, \tilde{\xi}) >-d$, we have
  \begin{equation*}
   \begin{aligned}
       \mathbb{P}^N\left\{\{\tilde{\xi}^i\}_{i=1}^N\colon\mathbb{P}\{\tilde{\xi} \colon L(\theta^\star, \tilde{\xi}) >-d \} \leq \varepsilon(C_N,\beta,N)\right\}
       \geq 1-\beta.
   \end{aligned}
   \end{equation*}
   Since $L(\theta^\star, \xi) \leq L(\theta^\star, \tilde{\xi})+d$, this then implies that
     \begin{equation*}
   \begin{aligned}
       \mathbb{P}^N\left\{\{\tilde{\xi}^i\}_{i=1}^N\colon\mathbb{P}\{\xi \colon L(\theta^\star, \xi) > 0 \} \leq \varepsilon(C_N,\beta,N)\right\}
       \geq 1-\beta,
   \end{aligned}
   \end{equation*}
   thus concluding the proof. \hfill \qed


\bibliographystyle{ieeetr}
\bibliography{Bibliography}


\end{document}